\numberwithin{equation}{section}
\theoremstyle{plain}
\newtheorem{proposition}{Proposition}
\newtheorem{lemma}[proposition]{Lemma}
\theoremstyle{definition}
\newtheorem{example}[proposition]{Example}
\newtheorem{remark}[proposition]{Remark}
\def\cleardoublepage{\clearpage\if@twoside \ifodd\c@page\else%
    \hbox{}%
    \thispagestyle{empty}%
    \newpage%
    \if@twocolumn\hbox{}\newpage\fi\fi\fi}
\def\figurename{Figure}
\renewcommand{\fnum@figure}[1]{\figurename~\thefigure.}
\def\tablename{Table}
\renewcommand{\fnum@table}[1]{\tablename~\thetable.}
\begin{document}

\title{
{ \vskip 0.45in \bfseries\scshape On the peakon inverse problem for the Degasperis-Procesi equation}}
\author{\bfseries\itshape Keivan Mohajer \thanks{Department of Mathematics, University of Isfahan, Isfahan, 81746-73441, Iran; k.mohajer@sci.ui.ac.ir}
%\and \bfseries\itshape Jacek Szmigielski\thanks{Department of
%Mathematics and Statistics, University of Saskatchewan, 106
%Wiggins Road, Saskatoon, Saskatchewan, S7N 5E6, Canada;
%szmigiel@math.usask.ca}
 }

\date{}

\maketitle
\begin{abstract}
  The peakon inverse problem for the Degasperis-Procesi equation
  is solved directly on the real line, using Cauchy biorthogonal polynomials, without any additional transformation to a ``string" type boundary value problem known from prior works.
  \end{abstract}

\section{Introduction}
The Degasperis-Procesi (DP) equation \cite{degasperis-procesi}
\begin{equation}\label{DP-equation}
u_t-u_{xxt}+4uu_x=3u_xu_{xx}+uu_{xxx},\quad (x,t)\in \mathbb{R}^2,
\end{equation}
like the other two similar nonlinear equations, namely, the
Camassa-Holm (CH) \cite{camassa-holm} and Novikov's equation
\cite{novikov}, admits a type of nonsmooth solution called
$n$-peakon. $n$-peakon solutions of nonlinear equations have been
the subject of research (see for example
\cite{beals-sattinger-szmigielski-moment},
\cite{beals-sattinger-szmigielski-stieltjes},
\cite{lundmark-szmigielski-DPlong} and \cite{hls}) mostly because
of their interesting soliton-like behavior (for the behavior of
soliton solutions of nonlinear equations see
\cite{constantin-escher}, \cite{constantin-ivanov-lenells},
\cite{constantin-mckean}, \cite{constantin-strauss} and
\cite{constantin2000}). In particular, the wave breaking phenomena
was studied in \cite{constantin-escher} and \cite{constantin2000},
and stability of peakons was studied in \cite{constantin-strauss}.
Also, peakons can be viewed as limit of solitary waves (see for
example \cite{yu-tian-wang}). For different views on the solutions
of the DP equation one can refer to \cite{qiao} and
\cite{vakhnenko-parkes}. An $n$-peakon is a solution of the
following form
\begin{equation}\label{n-peakon}
u(x,t)=\sum_{n=1}^{n} m_i(t) e^{-|x-x_i(t)|}.
\end{equation}
For the DP equation it is known that $x_i(t)$ and $m_i(t)$ must satisfy the following system of nonlinear ODEs:
\begin{align}
& \dot{x}_j = \sum_{i=1}^{n} m_i e^{-|x_j-x_i|},\quad j=1,\dots, n,\label{xmODE1}\\
& \dot{m}_j = 2\sum_{i=1}^{n}m_j m_i \operatorname{sgn}(x_j-x_i)
e^{-|x_j-x_i|},\quad j=1,\dots, n.\label{xmODE2}
\end{align}
The peakon inverse problem method provides the solution of
\eqref{xmODE1}-\eqref{xmODE2} and consequently it provides the
$n$-peakon solutions of the DP equation. Previously, for CH, DP
and the Novikov's equation the inverse problem was solved using a
transformation which translates the problem on the real line into
a problem on a finite interval. (see
\cite{beals-sattinger-szmigielski-stieltjes},
\cite{lundmark-szmigielski-DPlong} and \cite{hls} respectively.)
However, recently the peakon inverse problem for the two
integrable equations CH and Novikov's equation was solved in
\cite{Mohajer-Szmigielski-inverse-CH} and
\cite{Mohajer-Szmigielski-inverse-Novikov}, respectively, without
any transformation of the problem to a string type boundary value
problem. In particular, in
\cite{Mohajer-Szmigielski-inverse-Novikov} it was shown that for
the Novikov's equation, Cauchy biorthogonal polynomials
\cite{bgs1} are the solutions to the approximation problem
relevant to the inverse problem.
Recent developments (see
\cite{bgs3}, \cite{bgs2}, \cite{MR3088819}, \cite{MR3162486} and
\cite{bertola-bothner}) suggest that Cauchy biorthogonal
polynomials can be also useful in the investigation of various
problems in random matrix theory.  In this paper it is
shown that Cauchy biorthogonal polynomials \cite{bgs1} can be used
to solve the peakon inverse problem arising in the DP equation. The associated boundary value
problem is non-selfadjoint, given by a third order differential equation.  So it is indeed very interesting and important
to verify that these polynomials can be applied to solve the inverse problem of the non-selfadjoint type.
\section{Forward problem}
It is easy to verify that another form of the DP equation \eqref{DP-equation} is
\begin{align}\label{DP-m}
&m_t + m_x u +3 mu_x=0,\\
&m=u-u_{xx}.
\end{align}
So, if the $n$-peakon solution \eqref{n-peakon} satisfies the DP equation, we must have $m=2\sum_{i=1}^n m_i \delta_{x_i}$.
It is known that (See \cite{degasperis-holm-hone}) the DP equation is the compatibility condition for the following system for $\psi(x,t;z)$:
\begin{align}
& (\partial_x-\partial_x^3)\psi=zm\psi,\label{Lax-pair-1}\\
& \psi_t=\big(\frac{1}{z}(1-\partial_x^2)+u_x-u\partial_x\big)\psi.\label{Lax-pair-2}
\end{align}
If $m=2\sum_{i=1}^n m_i \delta_{x_i}$, then \eqref{Lax-pair-1} implies that in every open interval $(x_i,x_{i+1})$ we have $(\partial_x-\partial_x^3)\psi=0$. Therefore, on such an interval we have
\begin{equation}\label{local-psi}
\psi(x,t;z)=A_i(t;z)e^x + B_i(t;z) + C_i(t;z)e^{-x},\quad x\in (x_i,x_{i+1}),\quad i=0,1,\dots,n,
\end{equation}
where $x_0=-\infty$ and $x_{n+1}=+\infty$. Equation
$(\partial_x-\partial_x^3)\psi=2z\sum_{i=1}^n m_i
\psi(x_i,t;z)\delta_{x_i}$ also implies that (explanation can be
found in \cite{lundmark-szmigielski-DPlong})
\begin{equation}\label{ABC-coeff}
\begin{pmatrix}
A_k(t;z) \\ B_k(t;z) \\ C_k(t;z)
\end{pmatrix} = S_k(z)S_{k-1}(z)\dots S_2(z)S_1(z)
\begin{pmatrix}
1 \\ 0 \\ 0
\end{pmatrix},\ \ \ \ \ \ \ k=1,\dots,n.
\end{equation}
where
\begin{equation}\label{S-equation}
S_k(z)=I-zm_k
\begin{pmatrix}
e^{-x_k} \\ -2 \\ e^{x_k}
\end{pmatrix}
\begin{pmatrix}
e^{x_k} & 1 & e^{-x_k}
\end{pmatrix}.
\end{equation}
It can be verified that $S_k(z)^{-1}=S_k(-z)$, $\det(S_k(z))=1$ and $L S_k^T(-z) L^{-1}=S_k(-z)$ where
\begin{equation*}
L=\begin{pmatrix}
0 & 0 & 1\\
0 & -2& 0\\
1 & 0 & 0
\end{pmatrix}.
\end{equation*}
Now if we set
$S_{[n,k]}=S_n \dots S_{n-k+1}$, then
\begin{equation}\label{recursion}
\begin{pmatrix}
A_n \\ B_n \\ C_n
\end{pmatrix} =S_{[n,k]}
\begin{pmatrix}
A_{n-k} \\ B_{n-k} \\ C_{n-k}
\end{pmatrix}.
\end{equation}
In order to proceed we need the following lemma.
\begin{lemma}\label{L-recursion}
\begin{equation}
S_{[n,k]}^{-1}(z)=LS_{[n,k]}^{T}(-z)L^{-1}
\end{equation}
\end{lemma}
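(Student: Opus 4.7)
The plan is a direct computation, exploiting the two per-factor identities already verified for a single $S_k$: namely $S_k(z)^{-1}=S_k(-z)$ and $LS_k^T(-z)L^{-1}=S_k(-z)$. Both immediately give $LS_k^T(-z)L^{-1}=S_k^{-1}(z)$, so the lemma is really the statement that this per-factor identity is preserved under products in the correct order.

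First I would compute the left-hand side. Since $S_{[n,k]}(z)=S_n(z)\,S_{n-1}(z)\cdots S_{n-k+1}(z)$, reversing the order under inversion and applying $S_j(z)^{-1}=S_j(-z)$ gives
\begin{equation*}
S_{[n,k]}^{-1}(z)=S_{n-k+1}(-z)\,S_{n-k+2}(-z)\cdots S_n(-z).
\end{equation*}

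Next I would transform the right-hand side. Taking the transpose reverses the order of factors, so $S_{[n,k]}^T(-z)=S_{n-k+1}^T(-z)\cdots S_n^T(-z)$. Inserting $L^{-1}L=I$ between consecutive factors and using the general identity $L(A_1\cdots A_k)L^{-1}=(LA_1L^{-1})\cdots(LA_kL^{-1})$ yields
\begin{equation*}
LS_{[n,k]}^T(-z)L^{-1}=\bigl(LS_{n-k+1}^T(-z)L^{-1}\bigr)\cdots\bigl(LS_n^T(-z)L^{-1}\bigr)=S_{n-k+1}(-z)\cdots S_n(-z),
\end{equation*}
by the per-factor property $LS_j^T(-z)L^{-1}=S_j(-z)$. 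Comparing with the previous display finishes the proof.

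There is no real obstacle here; the only thing one must be careful about is keeping track of the order reversal caused by both inversion and transposition, which conveniently match, so no induction on $k$ is necessary. Alternatively, one could phrase the argument as a trivial induction on $k$, the base case $k=1$ being the stated single-factor identity and the inductive step using $S_{[n,k+1]}=S_{[n,k]}\cdot S_{n-k}$ together with the same two identities applied to $S_{n-k}$.
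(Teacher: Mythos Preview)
Your proof is correct and follows essentially the same route as the paper's: both use the order reversal from transposition, insert $L^{-1}L$ between consecutive factors to apply the per-factor identity $LS_j^T(-z)L^{-1}=S_j(-z)$, and then invoke $S_j(-z)=S_j^{-1}(z)$ to recognize the result as $S_{[n,k]}^{-1}(z)$. The only cosmetic difference is that the paper works entirely on one side, transforming $LS_{[n,k]}^T(-z)L^{-1}$ directly into $S_{[n,k]}^{-1}(z)$, whereas you compute both sides separately and compare.
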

\begin{proof}
We can write
\begin{align*}
L S_{[n,k]}^T(-z) L^{-1} &=L S_{n-k+1}^T(-z) \dots S_{n}^T(-z)\\
&=\bigl(L S_{n-k+1}^T(-z) L^{-1}\bigr)\bigl(L S_{n-k+2}^T(-z) L^{-1}\bigr)\dots \bigl(L S_{n}^T(-z) L^{-1}\bigr)\\
&=S_{n-k+1}(-z) \dots S_{n}(-z)\\
&=S_{n-k+1}^{-1}(z) \dots S_{n}^{-1}(z)\\
&=S_{[n,k]}^{-1}(z).
\end{align*}
\end{proof}
Thus, one can show that all the entries of the
matrix $S_{[n,k]}$ and its adjoint are of degree $k$. We denote
the entries of $S_{[n,k]}$ by $s_{ij}$. In the following propositions we obtain approximations that will be needed for the inverse problem.
\begin{proposition}\label{prop1}
\begin{equation}\label{approx-A-B}
\frac{B_n}{A_n}-\frac{s_{21}}{s_{11}}=\mathcal{O}\biggl(\frac{1}{z^k}\biggr),\
\ \ \ \ z\to \infty,
\end{equation}
\begin{equation}
\frac{C_n}{A_n}-\frac{s_{31}}{s_{11}}=\mathcal{O}\biggl(\frac{1}{z^k}\biggr),\
\ \ \ \ z\to \infty.
\end{equation}
\end{proposition}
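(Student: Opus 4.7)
The plan is to express the differences as rational functions whose numerators have lower degree than their denominators, with the size of the gap controlled by the degree of the entries of the adjugate of $S_{[n,k]}$.

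First, I would use the recursion \eqref{recursion} to write, explicitly,
$$
A_n = s_{11}A_{n-k} + s_{12}B_{n-k} + s_{13}C_{n-k}, \qquad
B_n = s_{21}A_{n-k} + s_{22}B_{n-k} + s_{23}C_{n-k},
$$
and then combine the two fractions over the common denominator $s_{11}A_n$:
$$
\frac{B_n}{A_n} - \frac{s_{21}}{s_{11}}
= \frac{s_{11}B_n - s_{21}A_n}{s_{11}A_n}
= \frac{(s_{11}s_{22}-s_{12}s_{21})\,B_{n-k} + (s_{11}s_{23}-s_{13}s_{21})\,C_{n-k}}{s_{11}\,A_n}.
$$
The $A_{n-k}$ contribution cancels identically, and the two coefficients appearing in the numerator are (up to sign) the $2\times 2$ cofactors of $S_{[n,k]}$, i.e.\ entries of $\adj S_{[n,k]}$.

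Second, I would invoke the degree bounds stated just before the proposition: the entries of $S_{[n,k]}$ are polynomials in $z$ of degree $k$, and, crucially, so are the entries of $\adj S_{[n,k]}$. This last fact is the only non-routine ingredient, and it is exactly what Lemma \ref{L-recursion} buys us: via $\det S_{[n,k]}=1$, the adjugate coincides with the inverse, and Lemma \ref{L-recursion} identifies it (up to conjugation by $L$) with $S_{[n,k]}^T(-z)$, whose entries are plainly of degree $k$. Without this observation one would only have the trivial bound $2k$ on the cofactors and the argument would fail.

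Third, I would count degrees. Since $B_{n-k}$ and $C_{n-k}$ are polynomials in $z$ of degree at most $n-k$ (they arise from the product $S_{n-k}\cdots S_1$ applied to $(1,0,0)^T$), the numerator has degree at most $k+(n-k)=n$. The denominator $s_{11}A_n$ has degree $k+n$ (with nonvanishing leading coefficient under the standing assumption that the $x_i$ are distinct and the $m_i$ nonzero, since the leading matrix coefficient of $S_{[n,k]}$ is the rank-one matrix proportional to $v_n u_{n-k+1}^T$ with nonzero $(1,1)$ entry). Hence the ratio is $\mathcal{O}(z^{-k})$ as $z\to\infty$, giving \eqref{approx-A-B}.

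Finally, the second estimate follows by the identical calculation with the third row of $S_{[n,k]}$ in place of the second: the cancellation
$$
s_{11}C_n - s_{31}A_n = (s_{11}s_{32}-s_{12}s_{31})\,B_{n-k} + (s_{11}s_{33}-s_{13}s_{31})\,C_{n-k}
$$
again exposes adjugate entries of degree $k$, and the same bookkeeping yields $\mathcal{O}(z^{-k})$. The only real obstacle is the degree bound on the adjugate; everything else is a mechanical rewriting of the recursion.
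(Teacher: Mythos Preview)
Your argument is correct and is essentially the same as the paper's: both proofs rewrite $\frac{B_n}{A_n}-\frac{s_{21}}{s_{11}}$ via the recursion so that only the adjugate entries $s_{11}s_{22}-s_{12}s_{21}$ and $s_{11}s_{23}-s_{13}s_{21}$ survive in the numerator, and then invoke the degree-$k$ bound on those cofactors (obtained from Lemma~\ref{L-recursion}). The only cosmetic difference is that the paper divides numerator and denominator by $A_{n-k}$ and appeals to $B_{n-k}/A_{n-k},\,C_{n-k}/A_{n-k}=\mathcal{O}(1)$, whereas you keep everything as polynomials and count degrees directly.
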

\begin{proof}
From \eqref{recursion} we have
\begin{equation*}
\begin{split}
\frac{B_n}{A_n}&=\frac{s_{21}A_{n-k}+s_{22}B_{n-k}+s_{23}C_{n-k}}{s_{11}A_{n-k}+s_{12}B_{n-k}+s_{13}C_{n-k}}\\
&=\frac{s_{21}+s_{22}\frac{B_{n-k}}{A_{n-k}}+s_{23}\frac{C_{n-k}}{A_{n-k}}}{s_{11}+s_{12}\frac{B_{n-k}}{A_{n-k}}+s_{13}\frac{C_{n-k}}{A_{n-k}}}.
\end{split}
\end{equation*}
Therefore,
\begin{equation*}
\frac{B_n}{A_n}-\frac{s_{21}}{s_{11}}=\frac{(s_{11}s_{22}-s_{12}s_{21})\frac{B_{n-k}}{A_{n-k}}+(s_{11}s_{23}-s_{13}s_{21})\frac{C_{n-k}}{A_{n-k}}}{s_{11}(s_{11}+s_{12}\frac{B_{n-k}}{A_{n-k}}+s_{13}\frac{C_{n-k}}{A_{n-k}})}.
\end{equation*}
Now, since all the entries of $S_{[n,k]}$ and its adjoint are of degree $k$ and
\begin{align*}
&\frac{B_{n-k}}{A_{n-k}}=\mathcal{O}(1),\quad z\to \infty,\\
&\frac{C_{n-k}}{A_{n-k}}=\mathcal{O}(1),\quad z\to \infty,
\end{align*}
We get
\begin{equation*}
\frac{B_n}{A_n}-\frac{s_{21}}{s_{11}}=\mathcal{O}\biggl(\frac{1}{z^k}\biggr),\quad z\to \infty.
\end{equation*}
A similar argument proves the second approximation.
\end{proof}
Now we consider the Weyl functions of the DP equation that are
introduced in \cite{lundmark-szmigielski-DPlong}, namely
\begin{equation}\label{Weyl-functions-def}
W(z)=-\frac{B(z)}{2zA(z)},\ \ \ \ \ \ \ \
Z(z)=\frac{C(z)}{2zA(z)},
\end{equation}
where $A(z)=A_n(z)$, $B(z)=B_n(z)$ and $C(z)=C_n(z)$. $A_n$, $B_n$
and $C_n$ are given by equation \eqref{ABC-coeff}. Note that in \cite{lundmark-szmigielski-DPlong} these two Weyl functions are denoted by $\omega$ and $\zeta$ respectively.
From proposition \ref{prop1} and  equations \eqref{Weyl-functions-def} we get the approximations
\begin{equation}\label{W-approx-s}
W(z)+\frac{s_{21}}{2zs_{11}}=\mathcal{O}\biggl(\frac{1}{z^{k+1}}\biggr),\
\ \ \ \ z\to \infty,
\end{equation}
\begin{equation}
Z(z)-\frac{s_{31}}{2zs_{11}}=\mathcal{O}\biggl(\frac{1}{z^{k+1}}\biggr),\
\ \ \ \ z\to \infty.
\end{equation}
We note that $s_{11}(0)=1$, $s_{21}(0)=0$ and $s_{31}(0)=0$.

\begin{proposition}
\begin{equation}\label{ABC-WZ}
\frac{C_n(z)}{A_n(z)}s_{11}(-z)-\frac{1}{2}\frac{B_n(z)}{A_n(z)}s_{21}(-z)+s_{31}(-z)=\mathcal{O}\biggl(\frac{1}{z^{k}}\biggr),\
\ \ \ \ z\to \infty.
\end{equation}
\end{proposition}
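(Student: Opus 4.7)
The plan is to use Lemma~\ref{L-recursion} to convert the recursion (\ref{recursion}) into an exact algebraic identity linking $(A_n,B_n,C_n)$ with $(A_{n-k},B_{n-k},C_{n-k})$, and then to estimate the resulting right-hand side after dividing by $A_n$.

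First I would apply the identity $S_{[n,k]}^{-1}(z)=L\,S_{[n,k]}^{T}(-z)\,L^{-1}$ to both sides of (\ref{recursion}) to get
\begin{equation*}
L\,S_{[n,k]}^{T}(-z)\,L^{-1}\begin{pmatrix}A_n\\B_n\\C_n\end{pmatrix}=\begin{pmatrix}A_{n-k}\\B_{n-k}\\C_{n-k}\end{pmatrix}.
\end{equation*}
Using the explicit forms of $L$ and $L^{-1}$ one computes $L^{-1}(A_n,B_n,C_n)^{T}=(C_n,-B_n/2,A_n)^{T}$, and since the third row of $L$ extracts the first entry of the vector to its right, reading off the third component of the above matrix identity yields the pairing of $(C_n,-B_n/2,A_n)$ with the first column of $S_{[n,k]}(-z)$, that is
\begin{equation*}
s_{11}(-z)\,C_n-\frac{1}{2}\,s_{21}(-z)\,B_n+s_{31}(-z)\,A_n=C_{n-k}.
\end{equation*}

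Dividing through by $A_n$ gives precisely the left-hand side of (\ref{ABC-WZ}), so it only remains to verify that $C_{n-k}/A_n=\mathcal{O}(z^{-k})$ as $z\to\infty$. I would factor this ratio as $(C_{n-k}/A_{n-k})\cdot(A_{n-k}/A_n)$: the first factor is $\mathcal{O}(1)$, exactly the estimate already used in the proof of Proposition~\ref{prop1}, while the second is $\mathcal{O}(z^{-k})$ because $A_j$ is a polynomial in $z$ of degree $j$ (each step $S_i(z)$ contributes one factor of $z$ to the $(1,1)$ entry, starting from $A_0=1$, $B_0=C_0=0$), so $\deg A_n-\deg A_{n-k}=k$. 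The main point needing care is simply the bookkeeping in the $L\cdot L^{-1}$ conjugation that turns the first-column entries $s_{i1}(-z)$ of $S_{[n,k]}(-z)$ into the specific coefficients appearing in the claimed combination; once that exact identity is in hand, the asymptotic is forced by the polynomial structure of $A_j$.
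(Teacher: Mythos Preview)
Your proof is correct and follows essentially the same route as the paper: invert the recursion via Lemma~\ref{L-recursion}, extract the exact identity $s_{11}(-z)C_n-\tfrac12 s_{21}(-z)B_n+s_{31}(-z)A_n=C_{n-k}$, and then divide by $A_n$. The only cosmetic difference is that the paper first takes the transpose and reads off the first entry of a row-vector identity, whereas you read the third component directly; your factorisation $C_{n-k}/A_n=(C_{n-k}/A_{n-k})(A_{n-k}/A_n)$ is slightly more explicit than the paper's one-line estimate but amounts to the same degree count.
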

\begin{proof}
By lemma \ref{L-recursion} we have
\begin{equation*}
L S_{[n,k]}^T(-z) L^{-1}
\begin{pmatrix}
A_n \\ B_n \\ C_n
\end{pmatrix}=
\begin{pmatrix}
A_{n-k} \\ B_{n-k} \\ C_{n-k}
\end{pmatrix}.
\end{equation*}
So,
\begin{equation*}
S_{[n,k]}^T(-z) L^{-1}
\begin{pmatrix}
A_n \\ B_n \\ C_n
\end{pmatrix}=L^{-1}
\begin{pmatrix}
A_{n-k} \\ B_{n-k} \\ C_{n-k}
\end{pmatrix}.
\end{equation*}
Taking the transpose of both sides we get
\begin{equation*}
\begin{pmatrix}
A_n & B_n & C_n
\end{pmatrix}L^{-1}S_{[n,k]}(-z)=\begin{pmatrix}
A_{n-k} & B_{n-k} & C_{n-k}
\end{pmatrix}L^{-1},
\end{equation*}
or
\begin{equation*}
\begin{pmatrix}
C_n & -\frac{1}{2}B_n & A_n
\end{pmatrix}S_{[n,k]}(-z)=\begin{pmatrix}
C_{n-k} & -\frac{1}{2}B_{n-k} & A_{n-k}
\end{pmatrix}.
\end{equation*}
Hence,
\begin{equation*}
C_n(z)s_{11}(-z)-\frac{1}{2}B_n(z)s_{21}(-z)+A_n(z)s_{31}(-z)=C_{n-k}(z).
\end{equation*}
So,
\begin{equation*}
\frac{C_n(z)}{A_n(z)}s_{11}(-z)-\frac{1}{2}\frac{B_n(z)}{A_n(z)}s_{21}(-z)+s_{31}(-z)=\frac{C_{n-k}(z)}{A_n(z)}=\mathcal{O}\biggl(\frac{1}{z^k}\biggr),
\quad z\to \infty.
\end{equation*}
\end{proof}

It is clear that $A_n(z)$, $B_n(z)$ and $C_n(z)$ are
polynomials in $z$ of degree $n$. Therefore, using the partial
fraction decomposition we can write
\begin{equation}\label{partial-decomp-general-W}
W(z)=-\frac{B(z)}{2zA(z)}=\sum_{k=1}^{p}\sum_{i=1}^{n_k}\frac{b_{ki}}{(z-\lambda_k)^i},
\end{equation}
\begin{equation}\label{partial-decomp-general-Z}
Z(z)=\frac{C(z)}{2zA(z)}=\sum_{k=1}^{p}\sum_{i=1}^{n_k}\frac{c_{ki}}{(z-\lambda_k)^i},
\end{equation}
where each $\lambda_k$ is a root of multiplicity $n_k$ for
$A_n(z)$ and $\sum_{k=1}^{p}n_k=n$. Now, from theorem $2.12$ in
\cite{lundmark-szmigielski-DPlong} we understand that with the
assumption of all $m_i(0)>0$ and $x_1(0)<\dots<x_n(0)$, we obtain real
distinct positive roots for $A_n(z)$ i.e.
$0<\lambda_1<\dots<\lambda_n$. As in
\cite{lundmark-szmigielski-DPlong} we call the above assumption, ``the pure peakon
assumption (PPA)''. For the time being we suppose the pure peakon
assumption holds unless otherwise stated. Therefore we can
introduce the discrete measure
$\mu(x)=\sum_{k=1}^{n}b_{k}\delta_{\lambda_k}(x)$. Then we have
\begin{equation}\label{partial-decomp-W}
W(z)=\sum_{k=1}^n\frac{b_k}{z-\lambda_k}= \int \frac{1}{z-x}\ d\mu(x),
\end{equation}
and
\begin{equation}\label{partial-decomp-Z}
Z(z)=\sum_{k=1}^{n}\frac{c_{k}}{z-\lambda_k}.
\end{equation}
According to a proposition in \cite{lundmark-szmigielski-DPlong}, if $x_i(t)$ and $m_i(t)$ satisfy the system of ODEs \eqref{xmODE1}-\eqref{xmODE2}, then
\begin{equation}\label{time-evolution-ABC}
\dot{A}=0,\quad \dot{B}=\frac{A}{z}-2AM_+,\quad \dot{C}=-BM_+.
\end{equation}
Applying equations \eqref{time-evolution-ABC}
to equations \eqref{partial-decomp-W} and
\eqref{partial-decomp-Z} one
can prove the following formulas (a detailed discussion can be
found in \cite{lundmark-szmigielski-DPlong})
%\begin{equation}
%\begin{split}
%&\dot{b}_{kj}=\sum_{i=j}^{n_k}\frac{(-1)^{i-j}b_{ki}}{\lambda_k^{i-j+1}},\\
%&M_+=\sum_{k=1}^{p} \dot{b}_{k1},\\
%&\dot{c}_{lj}=b_{lj}M_+=b_{lj}\sum_{k=1}^{p} \dot{b}_{k1},
%\end{split}
%\end{equation}

\begin{equation}
\begin{split}
&\dot{b}_{k}=\frac{b_{k}}{\lambda_k},\\
&M_+=\sum_{k=1}^{n} \dot{b}_{k},\\
&\dot{c}_{l}=b_{l}M_+=b_{l}\sum_{k=1}^{n} \dot{b}_{k},
\end{split}
\end{equation}
where $M_+ = \sum_{i=1}^{n}m_i(t)e^{x_i(t)}$.
Therefore one can proceed as in \cite{lundmark-szmigielski-DPlong}
to get the time evolution of $b_k$ and $c_k$ as follows:
\begin{equation}
\begin{split}\label{bc-time-evolution}
&b_k(t)=b_k(0)e^{t_k/\lambda_k},\\
&c_k(t)=\sum_{i=1}^{n}\frac{\lambda_k b_k(t) b_i(t)}{\lambda_i +
\lambda_k}.
\end{split}
\end{equation}
Hence, using equation \eqref{partial-decomp-Z} and the second equation of \eqref{bc-time-evolution} we can write
\begin{equation}
Z(z)=\iint \frac{x}{(z-x)(x+y)}\ d\mu(x)d\mu(y).
\end{equation}
\section{Inverse problem}
The peakon inverse problem can be formulated as follows:\\
Suppose the discrete measure
$\mu(x)=\sum_{k=1}^{n}b_{k}\delta_{\lambda_k}(x)$ is given and let
\begin{equation}
W(z)=\int \frac{1}{z-x}\ d\mu(x),\quad Z(z)=\iint
\frac{x}{(z-x)(x+y)}\ d\mu(x)d\mu(y).
\end{equation}
Find polynomials $P_k(z)$, $Q_k(z)$ and $\hat{P}_k(z)$ such that
\begin{equation}\label{WPQ}
W(z)-\frac{P_k(z)}{Q_k(z)}=\mathcal{O}(\frac{1}{z^{k+1}}),\ \ \ \
\ z\to \infty,
\end{equation}
\begin{equation}
Z(z)-\frac{\hat{P}_k(z)}{Q_k(z)}=\mathcal{O}(\frac{1}{z^{k+1}}),\
\ \ \ \ z\to \infty,
\end{equation}
\begin{equation}\label{ZQWP}
Z(-z)Q_k(z)-zW(-z)P_k(z)-\hat{P}_k(z)=\mathcal{O}(\frac{1}{z^k}),\
\ \ \ \ z\to \infty,
\end{equation}
with $\deg(P_k)=\deg(\hat{P}_k)=k$, $\deg(Q_k)=k+1$, $P_k(0)=0$,
$\hat{P}_k(0)=0$, $Q_k(0)=0$ and $Q_k^\prime(0)=2$.
Now consider the polynomials $p_k(z)=P_k(z)/z$, $\hat{p}_k(z)=\hat{P}_k(z)/z$ and $q_k(z)=Q_k(z)/z$. Then the approximation problem can be rewritten as follows
\begin{align}
& W(z)-\frac{p_k(z)}{q_k(z)}=\mathcal{O}(\frac{1}{z^{k+1}}),\quad z\to \infty,\label{approx-p-q}\\
& Z(z)-\frac{\hat{p}_k(z)}{q_k(z)}=\mathcal{O}(\frac{1}{z^{k+1}}),\quad z\to \infty,\label{approx-phat-q}\\
& Z(-z)q_k(z)-z W(-z)p_k(z)-\hat{p}_k(z)=\mathcal{O}(\frac{1}{z^{k+1}}),\quad z\to \infty,\label{approx-phat-p-q}
\end{align}
with $\deg(p_k)=\deg(\hat{p}_k)=k-1$, $\deg(q_k)=k$, and
$q_k(0)=2$. The next step is to transform the approximation problem \eqref{approx-p-q}--\eqref{approx-phat-p-q} into another approximation problem in order to apply theorem 5.1 in in \cite{bgs1}.
First, we need the following notation
\begin{equation}
\gamma_j=\int x^j d\mu (x),\quad I_{i,j}=\iint
\frac{x^{i+1}y^j}{x+y}d\mu (x) d\mu (y).
\end{equation}
Now the approximation problem
\eqref{approx-p-q}--\eqref{approx-phat-p-q} can be easily transformed into the following approximation problem.

\begin{align}
&W(z)-\frac{p_k(z)}{q_k(z)}=\mathcal{O}(\frac{1}{z^{k+1}}),\quad z\to \infty,\label{approx-e1}\\
&\gamma_0 W(z) - Z(z) - \frac{\gamma_0 p_k(z)-\hat{p}_k(z)}{q_k(z)}=\mathcal{O}(\frac{1}{z^{k+1}}),\quad z\to \infty,\label{approx-e2}\\
&Z(-z)q_k(z) - \left(\int\frac{x}{z+x}\,d\mu(x) \right) p_k(z) +
\gamma_0
p_k(z)-\hat{p}_k(z)=\mathcal{O}(\frac{1}{z^{k+1}}),\label{approx-e3}
\end{align}
where $\gamma_0=\int \,d\mu(x) = \sum_{k=1}^{n} b_k$.

At this point, one can easily observe that by setting $d\alpha(x)=x d\mu(x)$ and $d\beta(x)=d\mu(x)$, in definition 5.1 in \cite{bgs1}, we obtain the following
\begin{equation}\label{Nikishin}
\begin{split}
W_\beta (z)&=W(z)=\int\frac{1}{z-x}\,d\mu(x),\\
W_{\beta \alpha^*}(z)&=\gamma_0 W(z) - Z(z) =\iint \frac{y}{(z-x)(x+y)}\,d\mu(x)\,d\mu(y)\\ &=\iint \frac{x}{(z-y)(x+y)}\,d\mu(x)\,d\mu(y),\\
W_{\alpha^* \beta}(z)&=Z(-z)=\iint\frac{-x}{(z+x)(x+y)}\,d\mu(x)\,d\mu(y),\\
W_{\alpha^*}(z)&=\int \frac{x}{z+x}\,d\mu(x).
\end{split}
\end{equation}
Then the approximation problem \eqref{approx-e1}--\eqref{approx-e3} follows from definition 5.3 in \cite{bgs1}. So according to theorem 5.1 in \cite{bgs1} a sequence of Cauchy biorthogonal polynomials is the unique, up to a multiplicative constant, solution of the approximation problem \eqref{approx-e1}--\eqref{approx-e3}.
Therefore, using theorem 5.1 in \cite{bgs1} and the normalization
condition $q_k(0)=2$ we have
\begin{equation}\label{qk-det-formula}
q_k(z)=\frac{2}{\Delta_k^{01}}
\begin{vmatrix}
1 &z &\dots & z^k\\
I_{0,0} & I_{0,1} & \dots & I_{0,k}\\
\vdots & \vdots & \vdots & \vdots\\
I_{k-1,0} & I_{k-1,1} & \dots & I_{k-1,k}
\end{vmatrix},
\end{equation}
where
\begin{equation*}
\Delta_k^{ab}=
\begin{vmatrix}
I_{a,b} & I_{a,b+1} & \dots & I_{a,b+k-1}\\
I_{a+1,b} & I_{a+1,b+1} & \dots & I_{a+1,b+k-1}\\
\vdots & \vdots & \vdots & \vdots\\
I_{a+k-1,b} & I_{a+k-1,b+1} & \dots & I_{a+k-1,b+k-1}
\end{vmatrix}.
\end{equation*}
Also, we have
\begin{align}
p_k(z)&=\int \frac{q_k(z)-q_k(y)}{z-y}d\mu(y),\label{pk-formula}\\
\gamma_0 p_k(z)-\hat{p}_k(z)&=\iint  \frac{(q_k(z)-q_k(y))y}{(z-y)(x+y)}d\mu(x)
d\mu(y).
\end{align}
Now we summarize the steps to find the momenta $m_i(t)$ and the
locations $x_i(t)$ for the n-peakon solution \eqref{n-peakon} of
the DP equation. First, we need a notation. If $p(z)$ is a
polynomial, then by $p[i]$ we mean the coefficient of the $i$th
power of $z$ in $p(z)$. Equations \eqref{ABC-coeff},
\eqref{S-equation} and \eqref{recursion} imply that for the
entries of $S_{[n,k]}$ we have $s_{11}[1]=-\sum_{i=n-k+1}^n m_i$
and $s_{12}[1]=2\sum_{i=n-k+1}^n m_i e^{x_i}$. Now comparing the
approximations \eqref{WPQ} and \eqref{ZQWP} with
\eqref{W-approx-s} and \eqref{ABC-WZ} we see that $2s_{11}=q_k(z)$
and $s_{12}=-zp_k(z)$. Therefore, from equation
\eqref{qk-det-formula} we get
\begin{equation}\label{formula-M}
\sum_{i=n-k+1}^n m_i =-\frac{q_k[1]}{2} =\frac{1}{\Delta_k^{01}}
\begin{vmatrix}
I_{0,0} & I_{0,2} &I_{0,3} & \dots & I_{0,k}\\
I_{1,0} & I_{1,2} &I_{1,3} &\dots & I_{1,k}\\
\vdots & \vdots & \vdots &\vdots &\vdots\\
I_{k-1,0} & I_{k-1,2} &I_{k-1,3} &\dots & I_{k-1,k}
\end{vmatrix}.
\end{equation}
Also, equations \eqref{pk-formula} and \eqref{qk-det-formula} imply that
\begin{equation}\label{formula-M+}
\sum_{i=n-k+1}^n m_i e^{x_i} =-\frac{p_k[0]}{2}=-\frac{1}{\Delta_k^{01}}
\begin{vmatrix}
0 &\gamma_0 &\dots & \gamma_{k-1}\\
I_{0,0} & I_{0,1} & \dots & I_{0,k}\\
\vdots & \vdots & \vdots & \vdots\\
I_{k-1,0} & I_{k-1,1} & \dots & I_{k-1,k}
\end{vmatrix}.
\end{equation}
Note that when $k=1$, we get
\begin{equation}
m_n=\frac{I_{0,0}}{I_{0,1}},\quad m_n e^{x_n} = \frac{\gamma_{0}I_{0,0}}{I_{0,1}}.
\end{equation}
So, $x_n=\ln (\gamma_{0})$.
Hence, starting from $k=1$, successive application of formulas \eqref{formula-M} and \eqref{formula-M+} will recover $m_i(t)$ and $x_i(t)$ for $i=1,\dots n$.
\begin{example}
For $n=2$ the $2$-peakon solution can be obtained as follows
\begin{equation}
\begin{split}
& x_1 = \ln \frac{b_1 b_2(\lambda_2-\lambda_1)^2}{b_1 \lambda_1^2
+ b_2\lambda_2^2 +\lambda_1\lambda_2(b_1+b_2)},\\
&m_1 = \frac{(\lambda_1 + \lambda_2)(b_1\lambda_1 + b_2
\lambda_2)^2}{\lambda_1\lambda_2[(b_1\lambda_1 + b_2 \lambda_2)^2
+ \lambda_1\lambda_2(b_1 + b_2)^2]},\\
&x_2 = \ln(b_1 + b_2),\\
&m_2 = \frac{(b_1 + b_2)^2(\lambda_1 + \lambda_2)}{(b_1\lambda_1 +
b_2 \lambda_2)^2 + \lambda_1\lambda_2(b_1 + b_2)^2}.
\end{split}
\end{equation}
This is the same as the $2$-peakon solution $(2.7)$ in
\cite{lundmark-szmigielski-DPlong}.
\end{example}
\begin{remark}
Also it can be verified that (H. Lundmark, personal communication,
June 3, 2014) for every $n$ the solution of the inverse problem is
identical to the solution in \cite{lundmark-szmigielski-DPlong}.
\end{remark}

\section{Conclusions}
In this paper we have solved the peakon inverse problem for the DP equation, under the pure peakon assumption, directly on the real line using Cauchy biorthogonal polynomials. An open problem related to this work is to formulate and solve an inverse problem for the shock peakon solutions \cite{lundmark-shockpeakons} of the DP equation.

\section{Acknowledgments}
I would like to thank Professor H. Lundmark for tremendously
helpful suggestions and comments.

\bibliographystyle{abbrv}
\bibliography{IP}

\begin{thebibliography}{10}

\bibitem{beals-sattinger-szmigielski-stieltjes}
R.~Beals, D.~H. Sattinger, and J.~Szmigielski.
\newblock Multi-peakons and a theorem of {S}tieltjes.
\newblock {\em Inverse Problems}, 15(1):L1--L4, 1999.

\bibitem{beals-sattinger-szmigielski-moment}
R.~Beals, D.~H. Sattinger, and J.~Szmigielski.
\newblock Multipeakons and the classical moment problem.
\newblock {\em Advances in Mathematics}, 154:229--257, 2000.

\bibitem{bertola-bothner}
M.~Bertola and T.~Bothner.
\newblock Universality conjecture and results for a model of several coupled
  positive-definite matrices.
\newblock {\em arXiv:1407.2597 [math-ph]}, 2014.

\bibitem{bgs2}
M.~Bertola, M.~Gekhtman, and J.~Szmigielski.
\newblock The {C}auchy two-matrix model.
\newblock {\em Comm. Math. Phys.}, 287(3):983--1014, 2009.

\bibitem{bgs3}
M.~Bertola, M.~Gekhtman, and J.~Szmigielski.
\newblock Cubic string boundary value problems and {C}auchy biorthogonal
  polynomials.
\newblock {\em J. Phys. A}, 42(45):454006, 13, 2009.

\bibitem{bgs1}
M.~Bertola, M.~Gekhtman, and J.~Szmigielski.
\newblock Cauchy biorthogonal polynomials.
\newblock {\em J. Approx. Theory}, 162(4):832--867, 2010.

\bibitem{MR3088819}
M.~Bertola, M.~Gekhtman, and J.~Szmigielski.
\newblock Strong asymptotics for {C}auchy biorthogonal polynomials with
  application to the {C}auchy two-matrix model.
\newblock {\em J. Math. Phys.}, 54(4):043517, 25, 2013.

\bibitem{MR3162486}
M.~Bertola, M.~Gekhtman, and J.~Szmigielski.
\newblock Cauchy-{L}aguerre two-matrix model and the {M}eijer-{G} random point
  field.
\newblock {\em Comm. Math. Phys.}, 326(1):111--144, 2014.

\bibitem{camassa-holm}
R.~Camassa and D.~D. Holm.
\newblock An integrable shallow water equation with peaked solitons.
\newblock {\em Phys. Rev. Lett.}, 71(11):1661--1664, 1993.

\bibitem{constantin2000}
A.~Constantin.
\newblock Existence of permanent and breaking waves for a shallow water
  equation: a geometric approach.
\newblock {\em Ann. Inst. Fourier (Grenoble)}, 50(2):321--362, 2000.

\bibitem{constantin-escher}
A.~Constantin and J.~Escher.
\newblock Wave breaking for nonlinear nonlocal shallow water equations.
\newblock {\em Acta Math.}, 181(2):229--243, 1998.

\bibitem{constantin-ivanov-lenells}
A.~Constantin, R.~I. Ivanov, and J.~Lenells.
\newblock Inverse scattering transform for the {D}egasperis-{P}rocesi equation.
\newblock {\em Nonlinearity}, 23(10):2559--2575, 2010.

\bibitem{constantin-mckean}
A.~Constantin and H.~P. McKean.
\newblock A shallow water equation on the circle.
\newblock {\em Comm. Pure Appl. Math.}, 52(8):949--982, 1999.

\bibitem{constantin-strauss}
A.~Constantin and W.~A. Strauss.
\newblock Stability of peakons.
\newblock {\em Comm. Pure Appl. Math.}, 53(5):603--610, 2000.

\bibitem{degasperis-holm-hone}
A.~Degasperis, D.~D. Holm, and A.~N.~W. Hone.
\newblock A new integrable equation with peakon solutions.
\newblock {\em Theoretical and Mathematical Physics}, 133:1463--1474, 2002.

\bibitem{degasperis-procesi}
A.~Degasperis and M.~Procesi.
\newblock Asymptotic integrability.
\newblock In A.~Degasperis and G.~Gaeta, editors, {\em Symmetry and
  perturbation theory (Rome, 1998)}, pages 23--37. World Scientific Publishing,
  River Edge, NJ, 1999.

\bibitem{hls}
A.~N.~W. Hone, H.~Lundmark, and J.~Szmigielski.
\newblock Explicit multipeakon solutions of {N}ovikov's cubically nonlinear
  integrable {C}amassa-{H}olm type equation.
\newblock {\em Dyn. Partial Differ. Equ.}, 6(3):253--289, 2009.

\bibitem{lundmark-shockpeakons}
H.~Lundmark.
\newblock Formation and dynamics of shock waves in the {D}egasperis--{P}rocesi
  equation.
\newblock {\em J. Nonlinear Sci.}, 17(3):169--198, 2007.

\bibitem{lundmark-szmigielski-DPlong}
H.~Lundmark and J.~Szmigielski.
\newblock Degasperis--{P}rocesi peakons and the discrete cubic string.
\newblock {\em IMRP Int. Math. Res. Pap.}, 2005(2):53--116, 2005.

\bibitem{Mohajer-Szmigielski-inverse-CH}
K.~Mohajer and J.~Szmigielski.
\newblock Inverse problems associated with integrable equations of
  {C}amassa-{H}olm type; explicit formulas on the real axis, {I}.
\newblock {\em Pacific J. of Applied Math.}, 3(1-2), 2012.

\bibitem{Mohajer-Szmigielski-inverse-Novikov}
K.~Mohajer and J.~Szmigielski.
\newblock On an inverse problem associated with an integrable equation of
  {C}amassa–-{H}olm type: explicit formulas on the real axis.
\newblock {\em Inverse Problems}, 28(1), 2012.

\bibitem{novikov}
V.~Novikov.
\newblock Generalizations of the {C}amassa-{H}olm equation.
\newblock {\em J. Phys. A}, 42(34):342002, 14, 2009.

\bibitem{qiao}
Z.~Qiao.
\newblock M-shape peakons, dehisced solitons, cuspons and new 1-peak solitons
  for the {D}egasperis--{P}rocesi equation.
\newblock {\em Chaos, Solitons and Fractals}, 37(2):501--507, 2008.

\bibitem{vakhnenko-parkes}
V.~Vakhnenko and E.~Parkes.
\newblock Periodic and solitary-wave solutions of the {D}egasperis–-{P}rocesi
  equation.
\newblock {\em Chaos, Solitons and Fractals}, 20(5):1059--1073, 2004.

\bibitem{yu-tian-wang}
L.~Yu, L.~Tian, and X.~Wang.
\newblock The bifurcation and peakon for {D}egasperis–-{P}rocesi equation.
\newblock {\em Chaos, Solitons and Fractals}, 30(4):956--966, 2006.

\end{thebibliography}
\end{document}